\newcolumntype{P}[1]{>{\centering\arraybackslash}p{#1}}
\newcolumntype{M}[1]{>{\centering\arraybackslash}m{#1}}
\def\BState{\State\hskip-\ALG@thistlm}
\definecolor{myblue}{RGB}{80,80,160}
\definecolor{mygreen}{RGB}{80,160,80}
\definecolor{myempty}{RGB}{255,255,255}
\newtheorem{theorem}{Theorem}[section]
\newtheorem{lemma}[theorem]{Lemma}
\def\denseformat{
	\setlength{\textheight}{9in}
	\setlength{\textwidth}{6.9in}
	\setlength{\evensidemargin}{-0.2in}
	\setlength{\oddsidemargin}{-0.2in}
	\setlength{\headsep}{10pt}
	\setlength{\topmargin}{-0.3in}
	\setlength{\columnsep}{0.375in}
	\setlength{\itemsep}{0pt}
}
\begin{document}
\title{Distributed Backup Placement in One Round and its Applications \\ to Maximum Matching Approximation and Self-Stabilization}

\author{Leonid Barenboim \thanks{Open University of Israel. E-mail: 
        {\tt\small leonidb@openu.ac.il}} \ , \ Gal Oren \thanks{Ben-Gurion University of the Negev, Nuclear Research Center - Negev. E-mail:
        {\tt\small orenw@post.bgu.ac.il} \newline This work was supported by the Lynn and William Frankel Center for Computer Science, the Open University of Israel's Research Fund, and ISF grant 724/15.}
}





\date{}

\def\thepage{}
\maketitle 
\begin{abstract}

In the distributed backup-placement problem \cite{halldorsson2015bp} each node of a network has to select one neighbor, such that the maximum number of nodes that make the same selection is minimized. This is a natural relaxation of the perfect matching problem, in which each node is selected just by one neighbor. Previous (approximate) solutions for backup placement are non-trivial, even for simple graph topologies, such as dense graphs \cite{barenboim2019fast}. In this paper we devise an algorithm for dense graph topologies, including unit disk graphs, unit ball graphs, line graphs, graphs with bounded diversity, and many more. Our algorithm requires just one round, and is as simple as the following operation. Consider a circular list of neighborhood IDs, sorted in an ascending order, and select the ID that is next to the selecting vertex ID. Surprisingly, such a simple one-round strategy turns out to be very efficient for backup placement computation in dense networks. Not only that it improves the number of rounds of the solution, but also the approximation ratio is improved by a multiplicative factor of at least $2$.

Our new algorithm has several interesting implications. In particular, it gives rise to a $(2 + \epsilon)$-approximation to maximum matching within $O(\log^* n)$ rounds in dense networks. The resulting algorithm is very simple as well, in sharp contrast to previous algorithms that compute such a solution within this running time. Moreover, these algorithms are applicable to a narrower graph family than our algorithm. For the same graph family, the best previously-known result has $O(\log {\Delta} + \log^* n)$ running time \cite{barenboim2018distributed}. Another interesting implication is the possibility to execute our backup placement algorithm as-is in the self-stabilizing setting. This makes it possible to simplify and improve other algorithms for the self-stabilizing setting, by employing  helpful properties of backup placement. 
\end{abstract}

\section{Introduction}
\label{intro1}

{\bf Background:} The perfect matching problem in a graph $G = (V,E)$ aims to find a subset of edges $E' \subseteq E$, such that the edges of $E'$ are non-adjacent, and each vertex of $V$ belongs to exactly one edge of $E'$. Unfortunately, not every graph admits a perfect matching. In this paper we consider a natural relaxation of perfect matchings. Specifically, each vertex must select a neighbor, such that the maximum number of vertices that select the same vertex in the graph is minimized. In other words, the goal is finding a subset of edges $E' \subseteq E$, such that each vertex of $v$ belongs to at least on edge of $E'$, and the maximum degree in $G' = (V,E')$ is minimized. In addition, the edges of $E'$ are oriented, and each vertex in $G'$ has out degree of one. This corresponds to the requirement that each vertex selects just one neighbor, but can be selected by several neighbors. Note that if $E'$ is a maximum matching, the maximum degree in $G'$ is 1. When finding a maximum matching is not possible, the goal is minimizing the maximum degree of $G'$, while making sure that all vertices belong to edges of $E'$.

In the distributed setting, this problem is called {\em Backup Placement}. It was introduced by  Halldorsson, Kohler, Patt-Shamir, and Rawitz \cite{halldorsson2015bp}. It is very well motivated by computer networks whose nodes may have memory faults, and wish to store backup copies of their data at neighboring nodes \cite{oren2018distributed}. But neighboring nodes may incur faults as well, and so the number of nodes that select the same backup-node should be minimized. This way, if a backup node incurs faults, the number of nodes in the network that lose data is minimized.

The precise definition of the distributed variant of the problem is as follows. The computer network is represented by an unweighted unoriented graph $G = (V,E)$, where $V$ is the set of nodes, and $E$ is the set of communication links. Communication proceeds in synchronous discrete rounds. In each round vertices receive and send message, and perform local computations. A message sent over an edge in a certain round arrives to the endpoint of that edge by the end of the round. The algorithm terminates once every vertex outputs its neighbor selection for the backup placement. The running time is the number of rounds from the beginning until all vertices make their decisions. We consider two variants of networks: faultless networks and faulty networks. For the latter, the goal is obtaining a self-stabilizing algorithm. In Sections \ref{intro2} - \ref{intro3} we consider faultless networks. In Section \ref{selfstab} we consider faulty networks, and elaborate there on the additional properties of the problem and the required solution in this case.

The backup placement problem turned out to be quite challenging in general graphs. The best-currently known solution is a randomized algorithm with running time $O(\frac{\log^6 n}{\log^4 \log n})$ that obtains an approximation factor of $O(\frac{\log n}{\log \log n})$. This solution due to Halldorsen et al. \cite{halldorsson2015bp} is non-trivial at all, and involves distributed computations of a certain variant of matching, called an $f$-matching, in bipartite graphs. On the other hand, in certain network topologies, simpler and much more efficient solutions are known. In particular, this is the case in wireless networks, certain social networks, claw-free graphs, line graphs, and more generally, any graph with {\em neighborhood-independence bounded by a constant $c$}.  Neighborhood independence $I(G)$ of a graph $G=(V,E)$ is the maximum size of an independent set contained in a neighborhood $\Gamma(v), v \in V$. For graphs with $I(G) \leq c = O(1)$, a constant-time deterministic distributed algorithm with approximation ratio $2c + 1 = O(1)$ was devised by Barenboim and Oren \cite{barenboim2019fast}. Although not so complicated, this algorithm still consists of several stages, including a computation of a tree cover, and then handling differently the different parts of the trees. (Such as leafs and non-leafs.)

{\bf Our Results:} In the current paper we significantly simplify the backup placement algorithm for graphs with neighborhood independence bounded by a constant $c$. Specifically, the algorithm becomes uniform, and consists just of a single instruction that should be executed by all nodes in parallel within a single round. Consequently, the running time becomes just one round, which improves the  number of rounds required in the previous solution for such graphs by a constant. More importantly, this improves the approximation ratio as well, which becomes $c$, rather than $2c + 1$ of \cite{barenboim2019fast}. Furthermore, this instruction is solely a function of the IDs of a vertex and its neighbors. As IDs are stored in areas that are considered to be failure-free (in contrast to variables that are stored in Random Access Memory that is failure-pron), algorithms that perform computations only as a function of IDs within a single round can be translated into self-stabilizing ones in a straightforward way. The structure of our algorithm makes it especially suitable for implementation in real-life networks with limited resources, such as sensor networks, heterogeneous networks, and Internet of Things.

We employ our backup-placement algorithm in order to compute {\em maximum matching approximation} of an input graph $G$ with neighborhood independence bounded by $c$. For $c = O(1)$, we obtain a $(2 + \epsilon)$-approximation to maximum matching within $O(\log^* n)$ rounds. The best previously-known $O(1)$-approximation for such graph has running time $O(\log{\Delta} + \log^*n)$ \cite{barenboim2018distributed}, where $\Delta$ is the maximum degree of the graph. Another $O(1)$-approximate matching result, for a narrower family of graphs with bounded growth, has running time $O(\log^* n)$ \cite{schneider2008log}. However, this result of Schneider and Wattenhofer \cite{schneider2008log} is based on network decompositions, whose computation in such graphs involves very sophisticated arguments. Our algorithm, on the other hand, applies to a wider family of graphs, and is very simple. Specifically, it performs a constant number of iterations, each of which consists of computing a backup-placement $G' = (V,E')$, computing a maximal matching of it, and removing the matched edges and edges adjacent on them from $G$. Since the maximum degree of $G' = (V,E')$ is $c + 1 = O(1)$, computing a maximal matching in it requires just $O(\log^* n)$ rounds, using \cite{panconesi2001some}.

{\bf Graphs with bounded independence:} The family of graphs with neighborhood independence bounded by a constant is a very wide family that captures various network types. This includes unit disk graphs, unit balls graphs, line graphs, line graphs of hypergraphs, claw-free graphs, graphs of bounded diversity, and many more. Consequently, this graph family and its subtypes have been very widely studied, especially in the distributed computing setting \cite{gfeller2007randomized, schneider2009coloring, barenboim2011deterministic, barenboim2017deterministic, barenboim2018distributed, assadi2019algorithms, kuhn2019faster}. For example, unit disc graphs can model certain types of wireless networks. In such networks all nodes have the same transmission range that is the radius of a disc. If nodes are positioned in the plane, the neighbors of any node can be covered by at most 6 discs of radius $1/2$. Each such disc forms a clique, since all nodes inside it can transmit one to another. Thus a disc of radius $1/2$ cannot contain two or more independent nodes. Hence, the neighborhood independence of unit disc graphs is at most $6$. Another notable example is the family of line graphs. In these graphs each vertex belongs to at most $2$ cliques, and thus the neighborhood independence is bounded by $2$. 

A notable example of the benefit of analyzing such graphs is the very recent breakthrough of Kuhn \cite{kuhn2019faster}. Kuhn obtained a $(2\Delta-1)$-edge-coloring of {\em general graphs} by analyzing graphs with constant neighborhood independence. The resulting algorithm provides a vertex coloring of such graph with time below the square-root-in-$\Delta$ barrier. Since this provides, in particular, a vertex coloring of line graphs, it results in an edge coloring of general graphs. This result, as well as other results for this topology, illustrate how beneficial can be the analysis of graphs with bounded neighborhood independence.

\section{Distributed Backup Placement Algorithm}\label{intro2}

We begin with devising a procedure for computing $O(1)$-backup placement in graphs with bounded \textit{neighborhood independence} \textit{c}. 
We assume that each vertex knows only about its neighbors, and each vertex has a unique ID. The procedure receives a graph $G = (V,E)$ as input, and proceeds as follows. 
We define an operation {\em next-modulo} that receives a vertex $v$ and a set of its neighbors $\Gamma(v)$ in the graph $G$. The operation \textit{next-modulo}($v$, $\Gamma(v)$), selects a vertex in $\Gamma(v)$ with a higher ID than the ID of $v$, and whose ID is the closets to that of $v$. If no such neighbor is found, then the operation returns the neighbor with the smallest ID. 
Formally, each vertex $v \in V$ selects a neighbor $w$ of $v$ in $G$ with the property that $ID(w) > ID(v)$, and there is no other neighbor $z$ of $v$ such that $ID(w) > ID(z) > ID(v)$. If there is no such neighbor, then $v$ selects the minimum ID vertex in $\Gamma(v)$. All these selections are performed in parallel within a single round. This completes the description of the algorithm. Its pseudocode is provided in Algorithm \ref{algo1}. Its action is illustrated in Figure \ref{fig1}. The next theorem summarizes its correctness.

\begin{algorithm}[H]
\caption{Backup Placement in Graphs}
\label{algo1}
\begin{algorithmic}[1]
\Procedure{Graph-BP(Graph $G = (V,E)$)}{}
\State {\bf foreach} node $v \in G$ in parallel do:
\State \hspace{0.5cm} v.BP = \textit{next-modulo}($v$, $\Gamma(v)$) \State /* find the vertex next to $v$ in $\Gamma(v)$, according to a circular list of sorted IDs of $\Gamma(v) \cup v$ */

\EndProcedure
\end{algorithmic}
\end{algorithm}

\begin{figure}[H]
\centering
\begin{tikzpicture}

  \tikzstyle{vertex}=[circle,fill=black!25,minimum size=12pt,inner sep=2pt]
  
  \node[vertex] (G_0) at (0,0) {25};
  \node[vertex] (G_1) at (0.5,2) {9};
  \node[vertex] (G_2) at (2,0.5) {30};
  \node[vertex] (G_3) at (2,-0.5) {4};
  \node[vertex] (G_4) at (0.5,-2) {40};
  \node[vertex] (G_5) at (-0.5,-2) {7};
  \node[vertex] (G_6) at (-2,-0.5) {50};
  \node[vertex] (G_7) at (-2,0.5) {6};
  \node[vertex] (G_8) at (-0.5,2) {20};

  \draw [line width=0.2mm, black] (G_0) -- (G_1); 
  \draw [line width=0.2mm, black] (G_1) -- (G_8); 
  \draw [line width=0.2mm, black] (G_8) -- (G_0); 

  \draw [line width=0.2mm, black] (G_0) -- (G_3); 
  \draw [line width=0.2mm, black] (G_3) -- (G_2); 
  \draw [line width=0.2mm, black] (G_2) -- (G_0); 
  
  \draw [line width=0.2mm, black] (G_0) -- (G_5); 
  \draw [line width=0.2mm, black] (G_5) -- (G_4); 
  \draw [line width=0.2mm, black] (G_4) -- (G_0); 
  
  \draw [line width=0.2mm, black] (G_0) -- (G_6); 
  \draw [line width=0.2mm, black] (G_6) -- (G_7); 
  \draw [line width=0.2mm, black] (G_7) -- (G_0);

  \draw [->, line width=0.2mm, blue] [dashed] (G_1) to[out=135,in=45] (G_8);
  \draw [->, line width=0.2mm, blue] [dashed] (G_8) to[out=225,in=135] (G_0);

  \draw [->, line width=0.2mm, blue] [dashed] (G_3) to[out=225,in=315] (G_0);
  \draw [->, line width=0.2mm, blue] [dashed] (G_2) to[out=315,in=45] (G_3);
  \draw [->, line width=0.2mm, blue] [dashed] (G_0) to[out=45,in=135] (G_2);
 
  \draw [->, line width=0.2mm, blue] [dashed] (G_5) to[out=135,in=225] (G_0);
  \draw [->, line width=0.2mm, blue] [dashed] (G_4) to[out=225,in=315] (G_5); 

  \draw [->, line width=0.2mm, blue] [dashed] (G_7) to[out=45,in=135] (G_0);
  \draw [->, line width=0.2mm, blue] [dashed] (G_6) to[out=135,in=225] (G_7);

\end{tikzpicture}
\caption{A backup placement (\textit{blue}) in a graph with bounded neighborhood independence \textit{c = 4}.} 
\label{fig1}
\end{figure}
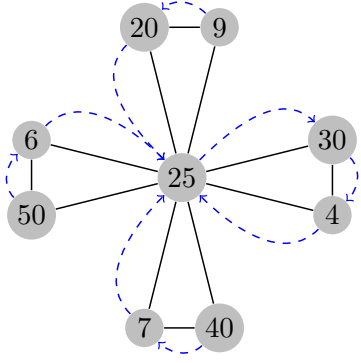

\begin{theorem}
Let $G$ be a graph with neighborhood independence of $c$, i.e., for each vertex in the graph with \textit{c+1} neighbors, at least two neighbors are connected by an edge. Then each vertex is selected by at most $c$ neighbors.
\end{theorem}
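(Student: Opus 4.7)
The plan is to fix an arbitrary vertex $u$ and show that the collection $S$ of neighbors that select $u$ forms an independent set inside $\Gamma(u)$; then the neighborhood-independence hypothesis will immediately yield $|S| \le c$.

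First I would split $S$ according to which clause of \emph{next-modulo} fired. Let $A = \{v \in S : ID(v) < ID(u)\}$ be those neighbors for which $u$ is the smallest-ID neighbor of $v$ with $ID(u) > ID(v)$, and let $B = \{v \in S : ID(v) > ID(u)\}$ be those neighbors that ``wrapped around,'' i.e., for which $v$ has no neighbor of higher ID than its own and $u$ is the minimum-ID vertex in $\Gamma(v)$. Every vertex selecting $u$ lies in exactly one of $A$ or $B$.

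Next I would show that $A \cup B$ is independent by eliminating each of the three possible adjacency types inside $\Gamma(u)$. If $v_i, v_j \in A$ with $ID(v_i) < ID(v_j) < ID(u)$ were adjacent, then $v_j$ would be a neighbor of $v_i$ with an ID strictly between $ID(v_i)$ and $ID(u)$, contradicting the fact that $v_i$ chose $u$ as its closest higher-ID neighbor. If $v_i, v_j \in B$ were adjacent, then each must be the maximum in the closed neighborhood of the other (because each satisfies the ``no higher-ID neighbor'' clause), forcing $ID(v_i) = ID(v_j)$, contradicting uniqueness of IDs. Finally, if $v_i \in A$ and $v_j \in B$ were adjacent, then $v_i \in \Gamma(v_j)$, but $u$ is the minimum-ID vertex of $\Gamma(v_j)$, so $ID(v_i) \ge ID(u)$, contradicting $ID(v_i) < ID(u)$.

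Thus $A \cup B$ is an independent subset of $\Gamma(u)$, and the hypothesis $I(G) \le c$ gives $|S| = |A \cup B| \le c$, completing the argument. The subtle step, and the one I expect to be the main obstacle on first attempt, is the mixed case $v_i \in A$, $v_j \in B$: one has to observe that the wrap-around clause that puts $v_j$ into $B$ simultaneously forbids any neighbor of $v_j$ with ID smaller than $ID(u)$, which is exactly what excludes every element of $A$ from $\Gamma(v_j)$. The two same-type cases are comparatively direct consequences of the greedy/wrap-around rules and ID uniqueness.
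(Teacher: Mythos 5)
Your proof is correct and is essentially the paper's argument in contrapositive form: the paper assumes $c+1$ neighbors select $u$, extracts two adjacent ones via the independence hypothesis, and rules out the resulting triangle by the same three ID-order cases that you use to show the selector set is independent. Your $A$/$B$ split by which clause of \emph{next-modulo} fired corresponds exactly to the paper's cases (2), (1), and (3), so the two write-ups differ only in packaging, with yours stating the key fact (the selectors form an independent subset of $\Gamma(u)$) a bit more directly.
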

\begin{proof}
Assume for contradiction that $c+1$ vertices chose the same vertex $u$ as a backup placement. Since the neighborhood independence is $c$, at least two vertices, $v_1, v_2$, of the $c+1$ vertices, are connected by an edge. Therefore, a triangle $\{u, v_1, v_2\}$ is formed in the subgraph induced by the edges selected by the algorithm. However, we next show that $u$ could not have been selected by both its neighbors $v_1,v_2$ in the triangle:
According to Algorithm \ref{algo1}, there are only three possibilities regarding ID order: \\
(1) $ID(u) < ID(v_1)$ and $ID(u) < ID(v_2)$, \\
(2) $ID(v_1) < ID(u)$ and $ID(v_2) < ID(u)$, \\
(3) $ID(v_1) < ID(u) < ID(v_2)$ or $ID(v_2) < ID(u) < ID(v_1)$.

However, in each of those possibilities, as presented in the three figures below, it is impossible for two vertices to choose the same vertex for backup placement. This is because there is always a vertex between $v_1$ and $u$ or between $v_2$ and $u$, that one of $u_1,u_2$ must select. Indeed, either $u_1$ is closer to $u_2$ than to $v$, with respect to \textit{next-modulo} operation, or $u_2$ is closer to $u_1$. 
More precisely:
\begin{itemize}
  \item Case (1),(2): if $ID(v_1) < ID(v_2)$ then $v_1$ selects $v_2$ or another neighbor with ID smaller than that of $v_2$, but not $u$. Otherwise $v_2$ selects $v_1$, or a neighbor with a smaller ID than that of $v_1$, but not $u$. In any case, both cannot select $u$ simultaneously.
  \item Case (3): The vertex with the highest ID among the three is either $v_1$ or $v_2$. It must select a neighbor with an even greater ID or the minimum ID. But $u$ has an ID smaller than that of $v_1$ or $v_2$, and is not the minimal among ${u,v_1,v_2}$. Thus, either $v_1$ or $v_2$ does not select $v$.
\end{itemize}

In any case, we have a contradiction to the assumption that both $v_1$,$v_2$ select $u$.

\begin{figure}[H]
    \centering
    
        \begin{minipage}[b]{0.3\textwidth}
        \centering
        
\begin{tikzpicture}
  \tikzstyle{vertex}=[circle,fill=black!25,minimum size=12pt,inner sep=2pt]
  
  \node[vertex] (G_0) at (0,0) {1};
  \node[vertex] (G_1) at (-1.5,-2) {2};
  \node[vertex] (G_2) at (1.5,-2) {3};
  
  \draw [-, line width=0.3mm, black] (G_0) -- (G_1); 
  \draw [-, line width=0.3mm, black] (G_0) -- (G_2);
  \draw [-, line width=0.3mm, black] (G_1) -- (G_2);

  \def\myshift#1{\raisebox{-2.5ex}}
\draw [->,line width=0.3mm, blue, dashed] (G_0) to [bend right=45]  (G_1);

  \def\myshift#1{\raisebox{-2.5ex}}
\draw [->,line width=0.3mm, blue, dashed] (G_1) to [bend right=45]  (G_2);

  \def\myshift#1{\raisebox{-2.5ex}}
\draw [->,line width=0.3mm, blue, dashed,postaction={decorate,decoration={text along path,text align=center,text={|\myshift|next-modulo}}}] (G_2) to [bend right=45]  (G_0);
  
\end{tikzpicture}
        \caption*{$2=ID(v_1) > ID(u)=1$\\$3=ID(v_2) > ID(u)=1$}
        \label{fig1}
    \end{minipage}
    \hspace{-5mm}
        \begin{minipage}[b]{0.3\textwidth}
        \centering
\begin{tikzpicture}
  \tikzstyle{vertex}=[circle,fill=black!25,minimum size=12pt,inner sep=2pt]
  
  \node[vertex] (G_0) at (0,0) {3};
  \node[vertex] (G_1) at (-1.5,-2) {1};
  \node[vertex] (G_2) at (1.5,-2) {2};
  
  \draw [-, line width=0.3mm, black] (G_0) -- (G_1); 
  \draw [-, line width=0.3mm, black] (G_0) -- (G_2);
  \draw [-, line width=0.3mm, black] (G_1) -- (G_2);

  \def\myshift#1{\raisebox{-2.5ex}}
\draw [->,line width=0.3mm, blue, dashed,postaction={decorate,decoration={text along path,text align=center,text={|\myshift|
next-modulo}}}] (G_0) to [bend right=45]  (G_1);

  \def\myshift#1{\raisebox{-2.5ex}}
\draw [->,line width=0.3mm, blue, dashed] (G_1) to [bend right=45]  (G_2);

  \def\myshift#1{\raisebox{-2.5ex}}
\draw [->,line width=0.3mm, blue, dashed] (G_2) to [bend right=45]  (G_0);

  
\end{tikzpicture}
        \caption*{$1=ID(v_1) < ID(u)=3$\\$2=ID(v_2) < ID(u)=3$}
        \label{fig2}
    \end{minipage}
\hspace{-5mm}
        \begin{minipage}[b]{0.3\textwidth}
        \centering
\begin{tikzpicture}
  \tikzstyle{vertex}=[circle,fill=black!25,minimum size=12pt,inner sep=2pt]
  
  \node[vertex] (G_0) at (0,0) {2};
  \node[vertex] (G_1) at (-1.5,-2) {1};
  \node[vertex] (G_2) at (1.5,-2) {3};
  
  \draw [-, line width=0.3mm, black] (G_0) -- (G_1); 
  \draw [-, line width=0.3mm, black] (G_0) -- (G_2);
  \draw [-, line width=0.3mm, black] (G_1) -- (G_2);

  \def\myshift#1{\raisebox{-2.5ex}}
\draw [->,line width=0.3mm, blue, dashed,postaction={decorate,decoration={text along path,text align=center,text={|\myshift|
}}}] (G_1) to [bend left=45]  (G_0);

  \def\myshift#1{\raisebox{-2.5ex}}
\draw [->,line width=0.3mm, blue, dashed] (G_2) to [bend left=45]  (G_1);

  \def\myshift#1{\raisebox{-2.5ex}}
\draw [->,line width=0.3mm, blue, dashed] (G_0) to [bend left=45]  (G_2);

  \def\myshift#1{\raisebox{-2.5ex}}
\draw [-,line width=0, transparent!0, dashed,postaction={decorate,decoration={text along path,text align=center,text={|\myshift|next-modulo}}}] (G_1) to [bend right=45]  (G_2);
  
\end{tikzpicture}
        \caption*{$ID(v_1) < ID(u) < ID(v_2)$\\ $1$ \ \ $ < $ \ \ $ 2 $ \ \ $ < $ \ \ $3$}
        \label{fig3}
    \end{minipage}
    \label{fig_proof1}
\end{figure}
\end{proof}

\section{Maximum Matching Approximation based on Backup Placement}\label{intro3}
A set of edges $M \in E$ is called a \textit{Matching} if and only if every vertex $v \in V$ belongs to at most one edge in $M$. 
A \textit{Maximal Matching} (shortly, \textit{MM}) is a matching that is maximal with respect to addition of edges, i.e., there is no edge $e \in E$ such that $M \cup \{e\}$ is a valid matching. A \textit{Maximum Matching} (shortly, \textit{MCM}) is a matching of maximum size among all matchings of $E$.

As shown in the previous section, given a graph $G=(V,E)$ with bounded neighborhood independence $c$, we can compute a $O(1)$-backup placement in a single round. This results in a subgraph, $G'=(V,E')$, where $E'$ is the set of all the selected edges of the backup placement algorithm. Each vertex in the subgraph $G'$ has selected one neighbor, and is selected by at most $c$ neighbors. All edges adjacent on a vertex in $G'$ are either selecting edges or selected edges. Thus, the number of such edges is at most $c + 1$. Consequently, the maximum degree $\Delta(G')$ is at most $c + 1$.

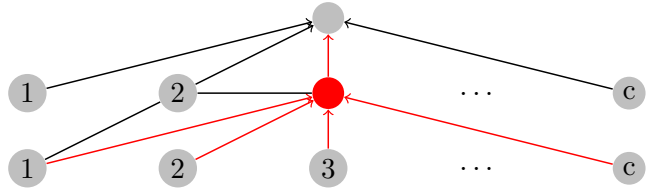
\begin{figure}[H]
\centering
\begin{tikzpicture}
  \tikzstyle{vertex}=[circle,fill=black!25,minimum size=12pt,inner sep=2pt]
  \tikzstyle{vertex_red}=[circle,fill=red,minimum size=12pt,inner sep=2pt]
  

  \node[vertex] (G_0_1) at (0,1) {};

  \node[vertex] (G_2_1) at (-4,0) {1};
  \node[vertex] (G_6_1) at (-2,0) {2};
  \node         (G_4_1) at (2,0) {\ldots};
  \node[vertex] (G_5_1) at (4,0) {c};

  \node[vertex_red] (G_1) at (0,0) {};
  
  \node[vertex] (G_2) at (-4,-1) {1};
  \node[vertex] (G_6) at (-2,-1) {2};
  \node[vertex] (G_3) at (0,-1) {3};
  \node         (G_4) at (2,-1) {\ldots};
  \node[vertex] (G_5) at (4,-1) {c};
  
  \draw [->, line width=0.2mm, red] (G_2) -- (G_1);
  \draw [->, line width=0.2mm, red] (G_6) -- (G_1);
  \draw [->, line width=0.2mm, red] (G_3) -- (G_1);
  \draw [->, line width=0.2mm, red] (G_5) -- (G_1);

  \draw [->, line width=0.2mm, black] (G_2_1) -- (G_0_1);
  \draw [->, line width=0.2mm, black] (G_6_1) -- (G_0_1);
  \draw [->, line width=0.2mm, red] (G_1) -- (G_0_1);
  \draw [->, line width=0.2mm, black] (G_5_1) -- (G_0_1);
  
  \draw [-, line width=0.2mm, black] (G_6_1) -- (G_2);
  \draw [-, line width=0.2mm, black] (G_6_1) -- (G_1);
  
  \end{tikzpicture}
\caption{The maximum degree $\Delta(G')$ of the subgraph $G'=(V,E')$ is at most $c + 1$. In red: a vertex with degree of $c + 1$, as it was selected by its neighbors with neighborhood independence of $c$, and selected one additional neighbor.} \label{fig3}
\end{figure}

We devise a Maximum Matching approximation based on this backup placement subgraph in $O(\log^* n)$ rounds. To this end, we compute a backup placement of an input graph $G$, obtain the graph $G' = (V,E')$, where $E'$ is the set of selected edges, and execute a maximal matching algorithm of Panconesi and Rizzi \cite{panconesi2001some} on $G'$. The latter algorithm computes a maximal matching of an input graph with degree $\Delta$ within $O(\Delta + \log^*n)$ rounds.

\begin{lemma}
\label{lemmaa}
Given a graph $G=(V,E)$ with bounded neighborhood independence $c$, we achieve $(c + 1)$-approximation of the Maximum Matching problem.
\end{lemma}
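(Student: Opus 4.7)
The plan is to compare the maximal matching $M$ computed by the Panconesi--Rizzi algorithm on $G'$ with a maximum matching $M^*$ of $G$, bounding both in terms of $|V|$. On the one hand, $|M^*|\le |V|/2$ trivially. On the other hand, I will lower-bound $|M|$ by roughly $|V|/(2(c+1))$, and then divide to obtain the ratio $c+1$.

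First I would note that $M$, being a matching of the subgraph $G'\subseteq G$, is itself a matching of $G$, so the approximation claim makes sense. Next, I would use two structural facts about $G'$ established earlier: every vertex has at least one incident edge in $G'$ (because backup placement forces each vertex to select a neighbor), and $\Delta(G')\le c+1$. Combined with the maximality of $M$, these imply that every vertex of $V\setminus V(M)$ has at least one $G'$-neighbor, and all of its $G'$-neighbors must lie in $V(M)$ — otherwise an edge could be added to $M$, contradicting maximality.

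The heart of the argument is then a counting step. I would count the edges of $G'$ that go between $V(M)$ and $V\setminus V(M)$. Each unmatched vertex contributes at least one such edge. Each vertex $u\in V(M)$ has degree at most $c+1$ in $G'$, and exactly one of its incident $G'$-edges is the matching edge connecting $u$ to its partner in $V(M)$; so $u$ contributes at most $c$ edges to $V\setminus V(M)$. Hence
\[
|V\setminus V(M)| \;\le\; c\,|V(M)| \;=\; 2c\,|M|,
\]
which yields $|V|\le 2|M|+2c|M|=2(c+1)|M|$, i.e.\ $|M|\ge |V|/(2(c+1))$. Dividing $|M^*|\le |V|/2$ by this bound gives $|M^*|/|M|\le c+1$, proving the lemma.

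The only subtle point — really the main thing to be careful about — is the ``$c$ rather than $c+1$'' in the counting step: one must remember to subtract the matching edge incident to each $u\in V(M)$ when estimating the number of its $G'$-neighbors in $V\setminus V(M)$. Without this observation, the naive bound $\Delta(G')\cdot |V(M)|$ would give only a $(c+2)$-approximation; the tight $(c+1)$-approximation relies on the fact that the matching itself already uses up one unit of degree at every matched vertex.
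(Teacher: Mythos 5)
Your proof is correct and follows essentially the same route as the paper: bound $|M^*|\le|V|/2$, show every vertex is in or $G'$-adjacent to $V(M)$ by maximality, and use $\Delta(G')\le c+1$ to get $|M|\ge|V|/(2(c+1))$. Your explicit counting step (subtracting the matching edge so each matched vertex contributes only $c$ cross edges) is in fact a cleaner justification of the paper's claim that $|V|\le(c+1)\,|V(M)|$, which the paper states somewhat loosely.
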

\begin{proof}
We begin with executing the backup placement algorithm, which computes the sub-graph $G'=(V,E')$. Next, we compute an MM of $G'$. Since $G'$ has bounded neighborhood independence $c$ and $\Delta(G') = c + 1$, we can show that a maximal matching of $G'$ has size at least $1/(c + 1)$ of $MCM(G)$.
This is because every vertex in $V$ is either in $MM(G')$ or adjacent (in $G'$) to a vertex in $MM(G')$. (Otherwise, it is adjacent to a free vertex, and an edge can be added to the maximal matching of $G'$. Contradiction.)  Thus, the set of vertices that belong to edges of $MM(G')$ together with the vertices adjacent on them in $G'$, are exactly the set $V$. On the other hand, since each vertex of $MM(G')$ is adjacent in $G'$ to at most $c + 1$ vertices, the size of $V$ is at most $c + 1$ times the number of vertices of $MM(G')$. Since the size of the maximum matching is at most $|V|/2$ and the size of $MM(G')$ is at least $|V|/2(c + 1)$, we obtained a $(c + 1)$-approximation to maximum matching


\end{proof}

\begin{lemma}
Given a graph $G=(V,E)$ with bounded neighborhood independence $c$, the running time of the Maximum Matching approximation is $O(\log^* n)$.
\end{lemma}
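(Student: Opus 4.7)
The plan is to simply add up the round complexities of the three steps that constitute the algorithm. The first step is the backup placement procedure of Section \ref{intro2}, which the excerpt already establishes runs in exactly one round. Immediately after this round, each vertex knows its own selection; by having every vertex announce its backup choice on that same round (piggybacked onto the single round of the backup placement itself), each vertex also learns which of its neighbors selected it. Hence, after one round, every vertex $v$ locally knows the edges of $G' = (V,E')$ incident on $v$.

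Next, I would appeal to the maximum-degree bound on $G'$ established just before Lemma \ref{lemmaa}: every vertex in $G'$ is incident to at most one selecting edge plus the at most $c$ edges by which it was selected, so $\Delta(G') \leq c+1$. Since $c = O(1)$, the subgraph $G'$ has constant maximum degree. The final step is to invoke the deterministic maximal matching algorithm of Panconesi and Rizzi \cite{panconesi2001some} on $G'$; its running time on any graph $H$ is $O(\Delta(H) + \log^* n)$, which in our case becomes $O((c+1) + \log^* n)$.

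Summing the contributions and absorbing the constant $c$ into the big-$O$, the total running time is $1 + O((c+1) + \log^* n) = O(\log^* n)$, as claimed. The only step that requires any verification beyond this summation, and which I expect to be the mildest of obstacles, is confirming that the Panconesi--Rizzi algorithm can be executed on $G'$ within the original communication model: since $V(G') = V$ and $E' \subseteq E$, every communication step the algorithm prescribes over an edge of $G'$ is realizable in one round of $G$ with no additional overhead. This completes the plan.
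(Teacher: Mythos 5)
Your proposal is correct and follows the same route as the paper: one round for the backup placement, the bound $\Delta(G') \leq c+1 = O(1)$, and then the $O(\Delta + \log^* n)$ maximal matching algorithm of Panconesi and Rizzi, giving $O(\log^* n)$ overall. The extra details you supply (how vertices learn incident edges of $G'$, and that $G'$ can be simulated within $G$'s communication model) are sound but the paper leaves them implicit.
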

\begin{proof}
Using $O(\Delta + \log^* n)$-time Maximal Matching algorithm by Panconesi and Rizzi \cite{panconesi2001some}, and due to the fact that $\Delta(G') = c + 1 = O(1)$, the achieved time complexity is $O(\log^* n)$.
\end{proof}

In order to reach an even better Maximum Matching approximation than the $(c + 1)$-approximation, we apply several times the maximal matching algorithm by Panconesi and Rizzi \cite{panconesi2001some} on $G'$ which was obtained by the $O(1)$-backup placement in graphs with bounded \textit{neighborhood independence} \textit{c}. In order to preserve a proper matching of $G$ in each step, after each computation of a maximal matching, we remove the resulting edges endpoints from $G'$, as well as all edges adjacent on these endpoints. Then we invoke again a maximal matching computation on the residual graph. We repeat this for a constant number of iterations. This completes the description of the algorithm. Its pseudocode is provided in Algorithm \ref{algo2} below. Next, we prove its correctness and analyze running time.

\begin{algorithm}[H]
\caption{Maximum Matching Approximation Algorithm}
\label{algo2}
\begin{algorithmic}[1]
\Procedure{Maximum-Matching-Approximation(Graph $G = (V,E)$)}{}
\State Let $k$ be a positive constant
\State $G' = GeneralBP(G)$
\State $MCMA = \emptyset$
\For {$i = 1,2, ..., k$}
\State $MCMA = MCMA \cup MM(G')$
\State $G = G \setminus MM(G')$ \ \ \ /* remove from $G$ the edges of $MM(G')$, the adjacent edges in $G$ of $MM(G')$, and all isolated vertices. */  
\State $G' = GeneralBP(G)$
\EndFor
\State return MCMA
\EndProcedure
\end{algorithmic}
\end{algorithm}

\begin{theorem}
\label{lemmab}
Given a graph $G=(V,E)$ with bounded neighborhood independence $c$, we achieve a $(2 + \epsilon)$-approximation of the Maximum Matching problem.
\end{theorem}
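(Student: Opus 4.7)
My plan is to analyze the algorithm iteratively against a fixed optimum matching $M^{*} := MCM(G)$, tracking how much of $M^{*}$ must survive as a matching inside the residual graph after each iteration. Let $G_i$ denote the residual input to the $i$-th call (so $G_1 = G$), $G_i' = GeneralBP(G_i)$, and let $M_i$ be the cumulative matching after iteration $i$ (with $M_0 = \emptyset$). Since $G_i$ is a vertex-induced subgraph of $G$, its neighborhood independence is still at most $c$, so Lemma~\ref{lemmaa} applies to $G_i$ and gives $|MM(G_i')| \geq |MCM(G_i)|/(c+1)$.

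To connect this to $M^{*}$, I would use the fact that the algorithm removes only the $2|M_{i-1}|$ endpoints of previously matched edges and the edges incident to them, and that each such vertex belongs to at most one edge of $M^{*}$. Hence at most $2|M_{i-1}|$ edges of $M^{*}$ are destroyed, and the surviving edges of $M^{*}$ form a matching inside $G_i$, which gives $|MCM(G_i)| \geq |M^{*}| - 2|M_{i-1}|$. Combining this with the previous inequality and $|M_i| = |M_{i-1}| + |MM(G_i')|$ yields the one-step recurrence
\[
|M^{*}| - 2|M_i| \;\leq\; \bigl(|M^{*}| - 2|M_{i-1}|\bigr)\cdot\frac{c-1}{c+1}.
\]

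Setting $r := (c-1)/(c+1)$ (note $r < 1$ for every finite $c$; for $c \leq 1$ a single iteration already produces a $2$-approximation and the statement holds trivially), unrolling gives $|M^{*}| - 2|M_k| \leq |M^{*}|\, r^k$, so $|M_k| \geq |M^{*}|(1-r^k)/2$, i.e., an approximation ratio of $2/(1-r^k)$. For any prescribed $\epsilon > 0$, choosing the constant $k$ so that $r^k \leq \epsilon/(2+\epsilon)$ — a constant depending only on $c$ and $\epsilon$ — delivers the promised $(2+\epsilon)$-approximation. The only delicate step is justifying reapplication of Lemma~\ref{lemmaa} in each iteration, which reduces to the observation that bounded neighborhood independence is preserved under vertex deletion; the rest is a routine geometric-decay argument, and the running time remains $O(\log^{*} n)$ since only $O(1)$ backup-placement and Panconesi--Rizzi maximal-matching rounds are executed.
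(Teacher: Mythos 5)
Your proposal is correct, but it takes a genuinely different analytical route from the paper. The paper never compares against a fixed optimum matching directly: it counts \emph{vertices}, arguing that each iteration matches (and removes) at least a $1/(c+1)$-fraction of the surviving vertices, so after $i$ iterations at most $(c/(c+1))^i n$ vertices remain; choosing $i$ so that this residue is an $\epsilon$-fraction of the first-iteration matching $n/(2(c+1))$, it concludes that the output is within a $(1+\epsilon)$ factor of some maximal matching and then invokes the fact that any maximal matching is a $2$-approximation to MCM. You instead fix $M^{*}=MCM(G)$, observe that the $2|M_{i-1}|$ deleted vertices kill at most $2|M_{i-1}|$ edges of $M^{*}$ so that $|MCM(G_i)|\ge |M^{*}|-2|M_{i-1}|$, and drive the deficit $|M^{*}|-2|M_i|$ down geometrically at rate $(c-1)/(c+1)$. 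Your version buys an explicit, fully rigorous bound on the number of iterations as a function of $c$ and $\epsilon$, and it sidesteps the slightly informal step in the paper where ``$(1+\epsilon)$-approximation to MM'' is converted into ``$(2+\epsilon)$-approximation to MCM'' (which really yields $2/(1-\epsilon)$ and needs a constant rescaling of $\epsilon$); the paper's version is arguably more elementary since it only counts vertices and reuses Lemma~\ref{lemmaa} verbatim. Two small points to tidy in yours: (i) the one-step recurrence as written presumes $|M^{*}|-2|M_{i-1}|\ge 0$; when that quantity goes negative you already hold a $2$-approximation and monotonicity of $|M_i|$ finishes the argument, so the unrolled bound $|M_k|\ge |M^{*}|(1-r^k)/2$ remains valid, but this should be said; (ii) you should note, as you briefly do, that the residual graph has no isolated vertices (the algorithm deletes them), since the proof of Lemma~\ref{lemmaa} needs every vertex to have a neighbor to select.
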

\begin{proof}
Be Lemma 3.1, after the first iteration of the loop of line 5 of Algorithm 2, we obtain a $(c + 1)$-approximation to maximum matching. Moreover, since each edge in the matching is adjacent in $G'$ to at most $2c$ vertices, and the set of vertices of the matching with their neighbors is $V$, the size of the matching is  at least $|V|/2(c + 1) = n/2(c + 1)$. In each iteration of the loop, at least $1/(c+1)$-fraction of the vertices that are still in $G$ are matched and removed from $G$. Hence, for $i = 1,2,...,$ the number of vertices remaining in $G$ is at most $(c/c + 1)^i \cdot n$. All the other vertices are either matched or have all their neighbors matched. For an arbitrarily small fixed constant $\epsilon$ and a sufficiently large constant $i$, it holds that $(c/c + 1)^i \cdot n \leq \epsilon \cdot n/2(c + 1)$. In other words, the residual set of vertices after $i$ rounds is of size at most an $\epsilon$ fraction of the matching already computed in iteration 1. Thus, after $i$ iteration, any subset of remaining edges of $G$ whose addition makes the result a maximal matching, increases its size by at most an $\epsilon$ fraction. Therefore, the matching after $i$ iterations is a $(1 + \epsilon)$-approximation to MM. Since MM is a 2-approximation to MCM, our algorithm computes a $(2+\epsilon)$-approximate MCM within a constant number of iterations.
\end{proof}

\begin{theorem}
Given a graph $G=(V,E)$ with bounded neighborhood independence $c$, the running time of Algorithm \ref{algo2} is $O(\log^* n)$.
\end{theorem}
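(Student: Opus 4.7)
My plan is to decompose the running time of Algorithm~\ref{algo2} into the costs of its two kinds of primitives---backup placement and maximal matching---and then observe that the outer loop executes only a constant number of times, so the total cost is dominated by $O(\log^* n)$.

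First, I would invoke the result of Section~\ref{intro2} to note that each call to \textsc{GeneralBP} (line~3 and line~8) takes exactly one round, since it consists solely of each vertex applying the \textit{next-modulo} operation on its known neighborhood. The bookkeeping on line~7 (removing matched edges, their adjacent edges, and resulting isolated vertices) is purely local once the matching is known, and costs $O(1)$ rounds. So the only nontrivial contribution per iteration comes from the maximal matching computation on line~6.

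Next, I would bound the cost of each maximal matching invocation. From the discussion preceding Lemma~\ref{lemmaa}, the backup-placement subgraph satisfies $\Delta(G') \leq c + 1 = O(1)$. Applying the Panconesi--Rizzi maximal matching algorithm \cite{panconesi2001some} on $G'$ therefore requires
\[
O(\Delta(G') + \log^* n) = O(c + 1 + \log^* n) = O(\log^* n)
\]
rounds. Since the \textbf{for}-loop executes a constant number $k$ of iterations (as fixed in the proof of Theorem~\ref{lemmab}), the overall round complexity is $k \cdot O(\log^* n) + O(1) = O(\log^* n)$.

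The only point that requires a brief check---and really the only conceivable obstacle---is to confirm that the degree bound $\Delta(G') \leq c + 1$ continues to hold in \emph{every} iteration, not just the first one. This is immediate because neighborhood independence is monotone under taking subgraphs: removing vertices and edges from $G$ cannot enlarge any independent set inside a neighborhood, so the residual graph still has neighborhood independence at most $c$, and Section~\ref{intro2} then guarantees the $c + 1$ degree bound on the new $G'$. With that observation in place, the bound of $O(\log^* n)$ propagates across all $k$ iterations without degradation.
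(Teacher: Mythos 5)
Your proof is correct and follows essentially the same route as the paper: each iteration costs $O(\Delta(G') + \log^* n) = O(\log^* n)$ rounds via Panconesi--Rizzi since $\Delta(G') \leq c+1 = O(1)$, and the loop runs a constant number of times. Your added remark that neighborhood independence is monotone under subgraphs (so the degree bound persists across iterations) is a worthwhile detail the paper leaves implicit, but it does not change the argument.
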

\begin{proof}
Using $O(\Delta + \log^* n)$-time Maximal Matching algorithm by Panconesi and Rizzi \cite{panconesi2001some}, and due to the fact that $\Delta(G') = c + 1 = O(1)$, each iteration requires $O(\log^*n)$ rounds. Since the overall number of iterations is constant, the entire running time is $O(\log^* n)$ as well.
\end{proof}

\section{Self-stabilizing Backup Placement in Graphs of Bounded Neighborhood Independence}
\label{selfstab}
In this section we devise a self-stabilizing backup placement algorithm in Dijkstra model of self-stabilization \cite{edsger1974dijkstra}. In this model each vertex has a ROM (Read Only Memory) that is failure free, and a RAM (Random Access Memory) that is prone to failures. An adversary can corrupt the RAM of all processors in any way. However, in certain periods of time, faults do not occur. These periods of time are not known to the processors. The goal of a distributed self-stabilizing algorithm is reaching a proper state in all processors, once faults stop occurring. Since these time points are not known in advance, an algorithm is constantly executed by all processors. The stabilization time is the number of rounds from the beginning of a time period in which faults do not occur, until all processors reach a proper state, given that no additional faults occur during this time period.

Our algorithm stores only the $ID$ of a processor in its ROM. The backup placement selection is stored in the RAM of a processor. The self-stabilizing algorithm is extremely simple. Specifically, In each round each vertex executes the \textit{next-modulo} operation. In other words, each vertex repeats Algorithm \ref{algo1} in each round. This completes the description of the algorithm. Since this operation within a single (faultless) round results in a proper $O(1)$-Backup-placement in a graph with constant neighborhood independence, such an algorithm stabilizes within one round after faults stop occurring. Moreover the solution remains proper as long as there are no faults. We summarize this in the next theorem.
\begin{theorem}
In graphs with neighborhood independence bounded by a constant, our algorithm stabilizes within $1$ round and produces $O(1)$-backup-placement.
\end{theorem}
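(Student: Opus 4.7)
The plan is to reduce the self-stabilizing statement directly to Theorem 2.1, which has already been established for the one-shot faultless execution of Algorithm \ref{algo1}. The first step is to observe that the Dijkstra model isolates IDs from adversarial corruption: each $ID(v)$ lives in ROM, whereas only the backup selection $v.BP$ lives in the RAM. Thus, whatever arbitrary content the adversary writes into the RAM of any processor before faults cease, this content cannot influence any future execution of \textit{next-modulo}, because that operation reads only the executing vertex's own ID and the IDs of its neighbors, both of which are ROM-resident and therefore intact.

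The second step is to verify that a single synchronous round after the adversary stops is sufficient for every vertex to install a correct selection. In that round, each vertex $v$ broadcasts $ID(v)$ to all neighbors and simultaneously receives $ID(w)$ for every $w \in \Gamma(v)$. This supplies $v$ with exactly the inputs required by \textit{next-modulo}$(v,\Gamma(v))$; $v$ then evaluates the operation and overwrites $v.BP$ with the result, erasing whatever corrupted value may have been stored there previously. Hence, at the end of this one round, the global state of all RAM cells is identical to the state that would arise from running Algorithm \ref{algo1} once in the faultless setting on the same graph with the same ID assignment.

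The third step is to apply Theorem 2.1 verbatim to this post-round state: no vertex is selected by more than $c$ neighbors, and since $c = O(1)$, the configuration is a valid $O(1)$-backup-placement. The fourth and final step is to argue persistence. Since \textit{next-modulo} is a deterministic function of ROM-resident data alone, each subsequent fault-free round recomputes exactly the same selection; consequently the proper backup placement is maintained as an invariant for as long as no new faults occur.

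The main thing to be careful about is not a technical obstacle but a conceptual one: one must explicitly justify that Algorithm \ref{algo1} is a pure function of IDs and uses no prior RAM state whatsoever, so that the adversary's one-time corruption is completely overwritten during the first post-fault round. Once this observation is made, the stabilization time of $1$ round and the approximation guarantee follow immediately from the analysis of Section \ref{intro2}.
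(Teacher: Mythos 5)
Your proposal is correct and follows essentially the same route as the paper: the selection is a pure function of ROM-resident IDs, so one faultless round of \textit{next-modulo} overwrites any corrupted RAM with the output of Algorithm \ref{algo1}, Theorem 2.1 then gives the $O(1)$ bound, and determinism gives persistence. You merely spell out more explicitly than the paper does the ID-exchange within the round and the invariance argument, but the underlying reasoning is identical.
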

Thanks to the simplicity of this backup-placement algorithm, it can be used as a building block for other self-stabilization algorithms that employ backup-placements. Specifically, in each round an algorithm can execute the \textit{next-modulo} operation before its own code. This way, starting from the round after the round when faults stop occurring, a subgraph $G'$ of maximum degree $c + 1$ is obtained. This subgraph does not change as long as there are no additional faults. This is because the subgraph is deduced once faults stopped occurring, based only on values in the ROM, and this subgraph does not change in faultless rounds. Thus, a self-stabilizing algorithm with time of the form $f(\Delta,n) = f_1(\Delta) \cdot f_2(n)$ invoked on $G'$ will stabilize within $f(\Delta',n) + 1 = O(f_2(n))$ rounds. This is because $\Delta' = \Delta(G') = c + 1 = O(1)$ in graphs with bounded neighborhood independence $c$. Hence, we obtain the following theorem.
\begin{theorem}
In graph with neighborhood independence at most $c = O(1)$, any self-stabilizing maximal matching algorithm with time $f(\Delta,n) = f_1(\Delta) \cdot f_2(n)$ can be converted to a self-stabilizing $(c + 1)$-approximation of maximum matching with time $O(f_1(c) \cdot  f_2(n)) = O(f_2(n))$.
\end{theorem}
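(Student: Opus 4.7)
The plan is to combine the self-stabilizing backup placement from the preceding theorem with the given self-stabilizing maximal matching algorithm, run in a layered fashion. First, I would have every vertex execute the \textit{next-modulo} operation in every round, exactly as in the self-stabilizing backup placement construction. By the preceding theorem, within one round after faults stop occurring, this produces a stable subgraph $G' = (V, E')$ of maximum degree at most $c+1$. Crucially, since this computation reads only the IDs that reside in fault-free ROM, the subgraph $G'$ is deterministically fixed as soon as the faultless period begins, and does not change thereafter.

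Second, I would run the given self-stabilizing maximal matching algorithm on $G'$ in parallel, appending its code to the \textit{next-modulo} computation in each round. The matching algorithm's inputs are the edges of $G'$, which are derived entirely from ROM values, so from round $2$ of the faultless period onward it sees an unchanging input graph of maximum degree $\Delta' \leq c+1$. By hypothesis, it stabilizes within $f(\Delta', n) = f_1(\Delta') \cdot f_2(n)$ rounds, and since $\Delta' \leq c+1 = O(1)$, this time is $O(f_1(c) \cdot f_2(n)) = O(f_2(n))$. Adding the single round needed to stabilize $G'$ itself contributes only an additive constant, which is absorbed into the asymptotic bound.

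Third, I would invoke Lemma \ref{lemmaa} to convert the correctness guarantee: any maximal matching $M$ of $G'$ has size at least $|V|/2(c+1)$, while any matching of $G$ has size at most $|V|/2$, so $M$ is a $(c+1)$-approximation to the maximum matching of $G$. This conversion is purely combinatorial and does not depend on the self-stabilizing mechanics, so it applies to the stable state reached by the composed algorithm.

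The main subtlety, rather than an obstacle, is arguing that the two layers compose cleanly: one must observe that once $G'$ stabilizes, it behaves as a static input for the maximal matching layer, so the stabilization guarantee of the matching algorithm can be invoked as-is. This relies precisely on the ROM-only nature of the backup placement computation, which ensures that the matching algorithm does not see a moving target. Once this is noted, the running time is a direct product of the two stabilization times, and the approximation guarantee follows immediately from Lemma \ref{lemmaa}.
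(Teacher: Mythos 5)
Your proposal is correct and follows essentially the same route as the paper: prepend the ROM-only \textit{next-modulo} step to each round so that $G'$ with $\Delta(G') \leq c+1$ is fixed from the start of the faultless period, run the given self-stabilizing maximal matching algorithm on this static $G'$ to get stabilization in $f_1(c+1)\cdot f_2(n) + 1 = O(f_2(n))$ rounds, and appeal to Lemma \ref{lemmaa} for the $(c+1)$-approximation. Your write-up is in fact slightly more explicit than the paper's, which leaves the invocation of Lemma \ref{lemmaa} implicit.
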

For example, a maximal matching algorithm with running time $O(\Delta n + \Delta^2 \log n)$, such as the self-stabilizing algorithm of \cite{kunne2018self} adapted to a network with IDs, can be converted into a self-stabilizing $(c + 1)$-approximate MCM algorithm with $O(c \cdot n + c^2 \log n)$ time. This is $O( n)$, for $c = O(1)$.

\bibliographystyle{plain} 
\bibliography{bibliography} 

\end{document}